 \newtheorem{thm}{Theorem}[section]
 \newtheorem{cor}[thm]{Corollary}
 \newtheorem{lemma}[thm]{Lemma}
 \newtheorem{prop}[thm]{Proposition}
 \theoremstyle{definition}
    \newtheorem{conj}[thm]{Conjecture}
 \newtheorem{rem}[thm]{Remark}
 \newtheorem{ex}{Example}
 \numberwithin{equation}{section}
\newcommand\beq{\begin{equation}}
\newcommand\eeq{\end{equation}}
\newcommand{\pa}[1]{\left( #1 \right)}
\newcommand{\trip}{\|\kern-.08em |}
\newcommand{\caH}{{\mathcal H}}
\newcommand{\caL}{{\mathcal L}}
\newcommand{\caS}{{\mathcal S}}
\newcommand{\Z}{{\mathbb Z}}
\newcommand{\tr}{\mathrm{tr}}
\newcommand{\abs}[1]{\left\vert #1 \right\vert}
\newcommand{\norm}[1]{\left\Vert #1 \right\Vert}
\newcommand\be{\begin{equation}\begin{aligned}}
\newcommand\ee{\end{aligned}\end{equation}}
\newcommand{\bea}{\begin{eqnarray}}
\newcommand{\eea}{\end{eqnarray}}
\newcommand{\beann}{\begin{eqnarray*}}
\newcommand{\eeann}{\end{eqnarray*}}
\newcommand{\e}{\mathrm{e}}
\newcommand{\dist}{\mathrm{dist}}
\newcommand{\set}[1]{\left\{ #1 \right\}}
\author{Alexander Elgart}
\address{Department of Mathematics \\ Virginia Tech \\ Blacksburg, VA 24061-0123 \\ USA}
\email{aelgart@vt.edu}
\author{Martin Fraas}
\address{Department of Mathematics \\ UC Davis \\ Davis, CA 95616 \\ USA}
\email{fraas@vt.edu}
\begin{document}

\title{On Kitaev's determinant formula}
\epigraph{Sergey Naboko - in memoriam.}

\date{\today }

\begin{abstract} 
We establish a sufficient condition under which $\det\pa{ABA^{-1}B^{-1}}=1$ for a pair of bounded, invertible operators $A,B$ on a Hilbert space.
\end{abstract}
\thanks{Both authors are  supported in part by the NSF under grant DMS-1907435. A.E. is supported in part by the Simons Fellows in Mathematics grant.}
\maketitle

\section{Kitaev's formula and traces of certain commutators}
%
In a finite dimensional Hilbert space $\mathcal H$, the determinantal formula 
\be\label{eq:Det}
\det\pa{ABA^{-1}B^{-1}}=1
\ee
holds for any invertible operators $A,B\in\mathcal L(\mathcal H)$. Its naive generalization to the infinite dimensional case (via the Fredholm extension, see e.g., \cite[Sections 3]{S} for a background and basic properties) fails. A simple counterexample can be constructed using the Helton-Howe-Pincus formula, \cite{E}: Let $C$ and $D$ be bounded operators on a Hilbert space $\mathcal H$ such that $[C,D]\in \mathcal S_1$ (the Schatten trace class), where $[C,D]=CD-DC$ stands for the commutator of $C$ and $D$.  Then $\e^C\e^D\e^{-C-D}=I+S$, where  $I$ denotes the identity map  and $S\in\mathcal S_1$. In particular, the Fredholm operator below is well defined  and satisfies
\be\label{eq:PIN}\det\pa{\e^C\e^D\e^{-C-D}}=\e^{\frac12\tr[C,D]}.\ee
Thus ${\e^C\e^D\e^{-C}\e^{-D}}-I\in \mathcal S_1$ and using a basic property of the Fredholm determinant
\be\label{eq:Pinc}
\det{\pa{\e^C\e^D\e^{-C}\e^{-D}}}=\det{\pa{\e^C\e^D\e^{-C-D}}}\det{\pa{\e^{C+D}\e^{-C}\e^{-D}}}=\e^{\tr[C,D]}\ee
for such operators $C$ and $D$.


Let $R,L$ denote the forward and backward shift operators on $\ell^2(\mathbb N)$ (with respect to the standard basis $\set{e_n}$), and let $z\in\mathbb C$.  Then, the operators $A= \e^{zR}$, $B=\e^{L}$ are bounded and invertible, and moreover $[R,L]=P_1$, the orthogonal projection onto ${\rm Span}(e_1)$.  Hence, denoting by $I$ the identity map, \eqref{eq:Pinc} implies that $ABA^{-1}B^{-1}-I\in\mathcal S_1$ and 
\[\det\pa{ABA^{-1}B^{-1}}=\e^z,\]
i.e., the expression on the left hand side can take any non-zero complex value. 

It is therefore an interesting question to determine under which conditions \eqref{eq:Det} actually holds. Another motivation to study this object comes from physics, where it can be linked to the quantization of transport in quantum systems, \cite{K}. Indeed, if both \eqref{eq:Det} and \eqref{eq:Pinc} hold, one can deduce the quantization of $\tr[C,D]$, i.e., $\tr[C,D]\in2\pi i\Z$. Kitaev observed via a formal computation that, if a pair of unitaries $U_1=\e^{iC}$, $U_2=\e^{iD}$ with bounded self-adjoint operators $C,D$ satisfy $(U_1-I)(U_2-I),(U_2-I)(U_1-I)\in\mathcal S_1$, then \eqref{eq:Det} holds, implying the quantization for the case $[C,D]\in\mathcal S_1$. 


This suggest the following 
\begin{conj}\label{conj}
Let $\caH$ be a complex, separable Hilbert space. Suppose that 
\begin{enumerate}
\item $A,B\in \caL(\caH)$ are invertible;
\item $(A-I)(B-I),(B-I)(A-I)\in \mathcal S_1$.
\end{enumerate}
Then \eqref{eq:Det} holds.
\end{conj}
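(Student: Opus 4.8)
The plan is to reduce the claim to the Fredholm algebra framework of Equation~\eqref{eq:Pinc} by showing that, under hypotheses (1)--(2), the operator $ABA^{-1}B^{-1}-I$ lies in $\mathcal S_1$ and that the multiplicativity of the Fredholm determinant can be applied in a way that forces the value $1$. First I would establish the trace-class statement: from $(A-I)(B-I)\in\mathcal S_1$ and $(B-I)(A-I)\in\mathcal S_1$ one gets $[A,B]=(A-I)(B-I)-(B-I)(A-I)\in\mathcal S_1$, and more importantly $AB-BA\in\mathcal S_1$ together with the fact that $A-I$ and $B-I$ ``almost commute''. Writing $A=I+a$, $B=I+b$ with $ab,ba\in\mathcal S_1$, one expands $AB-BA$, $A^{-1}-I$, $B^{-1}-I$ in terms of $a,b$ and checks that $ABA^{-1}B^{-1}-I$ is a sum of products each of which contains at least one factor of the form $ab$ or $ba$ (after commuting bounded factors past each other modulo $\mathcal S_1$), hence lies in $\mathcal S_1$; this is the routine bookkeeping step.

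The substantive step is to compute the determinant. I would interpolate: for $t\in[0,1]$ consider $A_t=I+ta$ (or, better, a smooth path of invertibles from $I$ to $A$ staying inside the set where $(A_t-I)(B-I),(B-I)(A_t-I)\in\mathcal S_1$, e.g.\ using that $a$ is bounded so $I+ta$ is invertible for $t$ small and then rescaling, or using $A_t = A^t$ defined by a holomorphic functional calculus on a path avoiding the spectrum). Define $f(t)=\det(A_tBA_t^{-1}B^{-1})$. This is a $C^1$ function of $t$ with $f(0)=1$, and $f'(t)/f(t)=\tr\big(\dot A_t A_t^{-1} - B\dot A_t A_t^{-1}B^{-1}\big)$ by the standard formula $\frac{d}{dt}\log\det M_t = \tr(\dot M_t M_t^{-1})$, valid once $M_t-I\in\mathcal S_1$ and $\dot M_t M_t^{-1}\in\mathcal S_1$. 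The claim is that this logarithmic derivative vanishes: writing $X_t=\dot A_t A_t^{-1}$, one needs $\tr(X_t - BX_tB^{-1})=0$, i.e.\ $\tr([B,X_t]B^{-1})=0$. This is where hypothesis (2) enters decisively: $X_t$ is a trace-class perturbation of something commuting with $B$ (indeed $\dot A_t$ and $A_t^{-1}$ are each $I$ plus trace-class-times-$b$-type terms), so $[B,X_t]=[b,X_t]$ is trace class and one can use cyclicity $\tr([b,X_t]B^{-1})=\tr(X_t[B^{-1},b])=\tr(X_t[B^{-1},B])=0$ provided the relevant products are individually trace class. The key obstacle, and the place where the argument could fail, is exactly the legitimacy of these cyclicity manipulations: $\tr(PQ)=\tr(QP)$ requires $PQ,QP\in\mathcal S_1$ with the stronger condition that $P\in\mathcal S_1$ (or $P\in\mathcal S_p$, $Q\in\mathcal S_q$, $1/p+1/q=1$), and the counterexample with shifts shows that $[R,L]=P_1\in\mathcal S_1$ is \emph{not} enough — so the whole point is that hypothesis (2) is strictly stronger than $[A,B]\in\mathcal S_1$ and must be used in full to control not just $[A,B]$ but the individual near-commuting structure of $A-I$ and $B-I$.

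I therefore expect the main difficulty to be organizing the algebra so that every trace appearing is of a product that is genuinely trace class (not merely a difference of two trace-class operators), and so that the path $A_t$ preserves condition (2). A clean way to handle the latter is to avoid paths altogether: use the Fredholm-determinant multiplicativity directly, $\det(ABA^{-1}B^{-1}) = \det(ABA^{-1}) \det(B^{-1})$ once we know how $\det$ behaves under conjugation — but $\det(ABA^{-1})=\det(B)$ fails in general for Fredholm determinants precisely because $ABA^{-1}-B = [A,B]A^{-1}$ need not be trace class unless we know it is, which we do. So the alternative route is: (a) show $ABA^{-1}-B\in\mathcal S_1$, hence $\det(ABA^{-1})$ and $\det(B)$ both make sense relative to a common reference; (b) prove $\det(ABA^{-1})=\det(B)$ by the same logarithmic-derivative-in-$A$ computation, now with the cyclicity lemma isolated once and for all as $\tr([b,Y])=0$ whenever $b\in\mathcal S_1$ (or $bY,Yb\in\mathcal S_1$ appropriately), which is exactly the content guaranteed by (2). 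Then $\det(ABA^{-1}B^{-1})=\det(ABA^{-1})\det(B^{-1})=\det(B)\det(B)^{-1}=1$. The crux remains the same: verifying that the near-commutation hypothesis (2) upgrades all the formal cancellations into honest trace identities.
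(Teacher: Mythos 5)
First, a point of orientation: the paper does \emph{not} prove this statement. It is stated as Conjecture~\ref{conj}, the authors write explicitly that they do not know how to prove it, and the paper only establishes the weaker Theorem~\ref{thm:main} under the additional hypothesis (iii) on $(A^*-I)(B-I)$. So you are attacking an open problem, and your proposal must be judged on its own. The core of your idea is sound for a restricted class of $A$: writing $a=A-I$, $b=B-I$ and $A_t=I+ta$, one has $M_t:=A_tBA_t^{-1}B^{-1}=I+t(ab-ba)A_t^{-1}B^{-1}$, the map $t\mapsto M_t-I$ is $\mathcal S_1$-differentiable, and a careful computation gives $\dot M_tM_t^{-1}=[X_t,\tilde B]\tilde B^{-1}=A_t\bigl([X_t,b]B^{-1}\bigr)A_t^{-1}$ with $X_t=aA_t^{-1}$ and $\tilde B=A_tBA_t^{-1}$ (note this is \emph{not} your formula $\tr(X_t-BX_tB^{-1})$; the outer conjugation by $A_t$ is only harmless because $[X_t,b]B^{-1}$ is genuinely trace class and $X_t$ commutes with $A_t$ for this particular path). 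Since $bX_t=(ba)A_t^{-1}$ and $X_tb=ab-taA_t^{-1}ab$ are \emph{individually} trace class, the trace splits, and the Lidskii argument the paper uses in \eqref{eq:inv} gives $\tr(bX_tB^{-1})=\tr\bigl((I-B^{-1})X_t\bigr)=\tr\bigl(X_t(I-B^{-1})\bigr)=\tr(X_tbB^{-1})$, so $\frac{d}{dt}\det M_t=0$ and $\det M_1=1$.

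The genuine gap is the path. The segment $A_t=I+ta$ need not consist of invertible operators: $I+ta$ is singular exactly when $-1/t\in\spec(a)$, and for a general invertible $A$ the set $\{1/(1-\mu):\mu\in\spec(A)\}$ can separate $0$ from $1$ in $\mathbb C$ (e.g.\ when $\spec(A)$ contains a circle around the origin), so no reparametrization $I+\gamma(t)a$ through invertibles exists. Your proposed repairs do not work: ``rescaling'' does not address connectivity, and $A_t=A^t$ requires a holomorphic logarithm of $A$, which an invertible operator on an infinite-dimensional $\caH$ need not possess; worse, even when $A=\e^C$ exists, along the path $\e^{tC}$ the velocity is $X_t=C$, and the vanishing of the logarithmic derivative would now require $Cb,bC\in\mathcal S_1$ individually --- which does \emph{not} follow from $(\e^C-I)b,\,b(\e^C-I)\in\mathcal S_1$, because $\e^C-I$ can annihilate the spectral subspace on which $\spec(C)\subseteq 2\pi i\Z$. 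This is precisely the obstruction the paper's Lemma~\ref{lem1} is built to circumvent, and it can do so only for \emph{normal} operators (via the spectral projections $P_\Delta$, $Q_\Delta$), which is why hypothesis (iii) is imposed: it forces the polar factors $U$ and $|A|$ to be unitary/positive, hence normal, hence logarithmizable in a controlled way. Without some substitute for that step (normality, quasinilpotence as in Proposition~\ref{lem3}, or spectral-operator structure), your argument proves the conjecture only for $A$ joinable to $I$ by a path $I+\gamma(t)(A-I)$ inside the invertibles --- a genuine partial result, but not the conjecture. Two smaller defects: the claim that $\dot A_t$ is ``$I$ plus trace-class-times-$b$-type terms'' is false ($\dot A_t=a$ is merely bounded), and the alternative route $\det(ABA^{-1}B^{-1})=\det(ABA^{-1})\det(B^{-1})$ is meaningless, since neither factor is a Fredholm determinant unless $B-I\in\mathcal S_1$.
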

While we don't know how to prove Conjecture \ref{conj}, the purpose of this Note is to present an elementary derivation of the following weaker result.
\begin{thm}\label{thm:main}
Assume in addition to {\rm(i)-(ii)} that
\begin{enumerate}
\item[\rm{(iii)}]  $(A^*-I)(B-I),(B-I)(A^*-I)\in \mathcal S_1$.
\end{enumerate}
Then, \eqref{eq:Det} holds.
\end{thm}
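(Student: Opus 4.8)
The plan is to reduce \eqref{eq:Det} to the case where $A$ is a single operator exponential, and to treat that case by a homotopy argument. First, $[A,B]=(A-I)(B-I)-(B-I)(A-I)\in\mathcal S_1$ by {\rm(ii)}, so $ABA^{-1}B^{-1}-I=[A,B]A^{-1}B^{-1}\in\mathcal S_1$ and the Fredholm determinant in \eqref{eq:Det} is well defined. The point of hypothesis {\rm(iii)} is to control the polar factors of $A$: write $A=U|A|$ with $|A|=(A^*A)^{1/2}>0$ and $U$ unitary. From $A^*A-I=A^*(A-I)+(A^*-I)$ and {\rm(ii)}--{\rm(iii)} we get $(A^*A-I)(B-I),(B-I)(A^*A-I)\in\mathcal S_1$; dividing on the appropriate side by the bounded invertible $|A|+I$ gives $(|A|-I)(B-I),(B-I)(|A|-I)\in\mathcal S_1$, and then the identity $U-I=(A-|A|)|A|^{-1}$ yields the same statements with $U$ in place of $|A|$. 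Now $|A|=e^{h}$ with $h=\log|A|$ bounded self-adjoint, and $U=e^{iH}$ with $H$ bounded self-adjoint, $\spec(H)\subset[-\pi,\pi]$ (spectral theorem plus a fixed branch of $\log$). Writing $e^{\ell}-I=g(\ell)\,\ell$ with $g(z)=(e^z-1)/z$, the operator $g(\ell)$ is invertible since $\spec(\ell)$ misses the zero set $2\pi i(\mathbb Z\setminus\{0\})$ of $g$ (real for $\ell=h$, contained in $i[-\pi,\pi]$ for $\ell=iH$); hence the trace-class conditions descend to the generators: $h(B-I),(B-I)h,H(B-I),(B-I)H\in\mathcal S_1$.

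I would then isolate two lemmas. \emph{Multiplicativity:} if $A_1,A_2$ and $A_1A_2$ each satisfy {\rm(ii)} against $B$, then $\det\!\big(A_1A_2B(A_1A_2)^{-1}B^{-1}\big)=\det(A_1BA_1^{-1}B^{-1})\,\det(A_2BA_2^{-1}B^{-1})$; this is pure algebra, since $A_1A_2BA_2^{-1}A_1^{-1}B^{-1}=A_1C_2A_1^{-1}C_1$ with $C_i:=A_iBA_i^{-1}B^{-1}\in I+\mathcal S_1$, after which one uses only conjugation-invariance and multiplicativity of the Fredholm determinant on $I+\mathcal S_1$. \emph{Exponential case:} if $\ell$ is bounded and $\ell(B-I),(B-I)\ell\in\mathcal S_1$, then $\det(e^{\ell}Be^{-\ell}B^{-1})=1$. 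For this I would differentiate $t\mapsto\log\det M_t$ along $M_t:=e^{t\ell}Be^{-t\ell}B^{-1}$; writing $b=B-I$, one has $M_t-I,\dot M_t\in\mathcal S_1$ (factor $\ell$ out of the power series of $e^{t\ell}-I$ to see $(e^{t\ell}-I)b\in\mathcal S_1$), and using $\dot A_t=\ell A_t=A_t\ell$ together with $B^{-1}\ell B=\ell+B^{-1}[\ell,b]$ one computes $M_t^{-1}\dot M_t=BA_tB^{-1}[\ell,b]A_t^{-1}B^{-1}$, whence $\tfrac{d}{dt}\log\det M_t=\tr([\ell,b]B^{-1})$ is independent of $t$. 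Since $M_0=I$, this gives $\log\det(e^{\ell}Be^{-\ell}B^{-1})=\tr([\ell,B]B^{-1})$. Finally this trace vanishes: since $bB^{-1}=B^{-1}b=I-B^{-1}$, both $\ell(bB^{-1})=(\ell b)B^{-1}$ and $(bB^{-1})\ell=B^{-1}(b\ell)$ are trace class, so by cyclicity of the trace $\tr((\ell b)B^{-1})=\tr((b\ell)B^{-1})$, i.e.\ $\tr([\ell,b]B^{-1})=0$.

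Combining everything, $\det(ABA^{-1}B^{-1})=\det(UBU^{-1}B^{-1})\det(|A|B|A|^{-1}B^{-1})=\det(e^{iH}Be^{-iH}B^{-1})\det(e^{h}Be^{-h}B^{-1})=1$. The step I expect to be the real obstacle is the last one of the exponential-case lemma, namely proving $\tr([\ell,B]B^{-1})=0$ honestly: one cannot write it as $\tr\ell-\tr(B\ell B^{-1})$ because $\ell\notin\mathcal S_1$, and the cancellation instead rests on the (true, but not completely trivial) fact that $\tr(XY)=\tr(YX)$ whenever $X,Y$ are bounded and \emph{both} $XY$ and $YX$ lie in $\mathcal S_1$ — which follows from Lidskii's theorem together with the coincidence, with algebraic multiplicities, of the nonzero eigenvalues of $XY$ and $YX$. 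This is also the place where hypothesis {\rm(iii)} is essential: it is exactly what allows $A$ to be replaced by its polar factors $U,|A|$, each of which is an exponential whose generator pairs with $B-I$ in $\mathcal S_1$; absent {\rm(iii)} no such reduction is available, which is presumably why Conjecture~\ref{conj} is harder.
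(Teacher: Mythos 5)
Your argument is correct, and its first half --- the polar decomposition $A=U|A|$, the algebra showing that $U$ and $|A|$ each pair with $B-I$ in $\mathcal S_1$, and the multiplicativity of the Fredholm determinant over the factorization $A=U|A|$ --- coincides with Lemma~\ref{lem:SVD} of the paper. The second half genuinely diverges, and is arguably leaner. The paper reduces to a statement about exponentials of two normal operators (Proposition~\ref{thm}) and must then confront the possibility that the spectrum of the normal generator meets $2\pi i\Z\setminus\set{0}$, where $\e^{z}-1$ vanishes but $z$ does not; this is what forces the spectral-projection splitting $P_\Delta,Q_\Delta$ and the limit $\Delta\to0$ in Lemma~\ref{lem1}. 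You sidestep that entirely by choosing the \emph{principal} logarithms $h=\log|A|$ (self-adjoint) and $iH$ with $\spec(H)\subset[-\pi,\pi]$: since $g(z)=(\e^{z}-1)/z$ is nonvanishing on $\bbR\cup i[-\pi,\pi]$, the normal operator $g(\ell)$ is invertible and commutes with $\ell$, so the $\mathcal S_1$ conditions descend from $\e^{\ell}-I$ to the generator $\ell$ itself; after that only one lemma is needed, essentially Lemma~\ref{lem2} of the paper, which you moreover state for a general invertible $B$ rather than for $\e^{B}$, so no second exponentiation of $B$ is required. Your proof of that lemma by differentiating $\log\det M_t$ along $M_t=\e^{t\ell}B\e^{-t\ell}B^{-1}$ replaces the paper's appeal to the Helton--Howe--Pincus identities \eqref{eq:PIN}--\eqref{eq:Pinc}, but the decisive cancellation is identical: the vanishing of $\tr\pa{[\ell,B]B^{-1}}$ rests on $\tr(XY)=\tr(YX)$ for bounded $X,Y$ with both $XY,YX\in\mathcal S_1$ (Lidskii), exactly as in \eqref{eq:inv} --- you are right to single this out as the crux, and it is the same point the paper leans on. What the paper's longer route buys is the standalone Lemma~\ref{lem1} for arbitrary normal generators, which underlies the remark that Conjecture~\ref{conj} holds for normal (and spectral) operators; your route proves Theorem~\ref{thm:main} more directly but does not by itself yield that more general statement.
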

Let us stress that the condition \rm{(iii)} is not a necessary, but only a sufficient condition. This can be seen from the following assertion (whose proof can be found in Section \ref{sec:proofs}).
\begin{prop}\label{lem3}
Let $C$ be a quasinilpotent operator  and $D$ bounded, such that
 \be\label{Kcond}\pa{\e^{C}-I}\pa{\e^{D}-I},\pa{\e^{D}-I}\pa{\e^{C}-I}\in\mathcal S^1.\ee
 Then 
 \[\det\pa{\e^C\e^{D}\e^{-C}\e^{-D}}=1.\]
\end{prop}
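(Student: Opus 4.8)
The plan is to exploit the fact that $C$ is quasinilpotent so that $\e^{\pm C} - I$ differs from a "finite truncation" only by a small correction, and to reduce the four-factor determinant to the two-factor Helton--Howe--Pincus situation where the answer is computed by a trace. Concretely, write $U = \e^C\e^D\e^{-C}\e^{-D}$. By the same algebra as in \eqref{eq:Pinc}, condition \eqref{Kcond} forces $U - I \in \mathcal S_1$, so $\det(U)$ is a well-defined Fredholm determinant; the goal is to show it equals $1$.

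First I would establish that $X := \e^C - I$ is itself quasinilpotent (it is a power series in $C$ with zero constant term, hence lies in the closed algebra generated by $C$, and has spectrum $\{0\}$), and likewise $\e^{-C} - I$. The key structural input is \eqref{Kcond}: both $X(\e^D - I)$ and $(\e^D - I)X$ are trace class. I would then introduce a one-parameter family, replacing $C$ by $tC$ for $t \in [0,1]$, set $f(t) = \det\big(\e^{tC}\e^{D}\e^{-tC}\e^{-D}\big)$, and show $f$ is constant. Differentiating a Fredholm determinant, $\frac{d}{dt}\log f(t) = \tr\big(U_t^{-1}\,\dot U_t\big)$ where $U_t = \e^{tC}\e^D\e^{-tC}\e^{-D}$; using $\dot U_t = C U_t - \e^{tC}\e^D C e^{-tC}\e^{-D}$, this collapses to $\tr\big(C - U_t^{-1}\e^{tC}\e^D C \e^{-tC}\e^{-D}\big) = \tr\big(C - \e^D \e^{-tC}\,C\,\e^{tC}\e^{-D} \cdot (\text{conjugation})\big)$, i.e. to the trace of a commutator-type expression $\tr\big([\,\cdot\,,\,\cdot\,]\big)$ built from $C$ and the exponentials. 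The point of quasinilpotence of $\e^{\pm C}-I$ is that it makes this trace vanish: a trace-class commutator $[P,Q]$ with one factor of the form $I + (\text{quasinilpotent})$ has zero trace, because the trace of a trace-class commutator of a bounded operator with a quasinilpotent operator vanishes (by a Lidskii/spectral-radius argument — the relevant sums of eigenvalues are governed by the spectrum, which is $\{0\}$).

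The main obstacle — and the place to be careful — is the justification that the relevant traces of commutators vanish. The naive statement "$\tr[P,Q]=0$ whenever $[P,Q]\in\mathcal S_1$" is false (that is exactly the phenomenon behind \eqref{eq:PIN}), so quasinilpotence must be used in an essential way. I would isolate the following lemma: if $N$ is quasinilpotent, $M$ is bounded, and $NM, MN \in \mathcal S_1$, then $\tr(NM) = \tr(MN)$, equivalently $\tr[N,M]=0$. This should follow by approximating $N$ in the appropriate topology — using that $N$ quasinilpotent means $\|N^k\|^{1/k}\to 0$, one expands the exponentials and reduces to polynomial expressions $N^k$ paired against $\e^D-I$, where cyclicity of the trace is legitimate because each individual product is trace class, and the tail is controlled by $\|N^k\| \to 0$ faster than any geometric rate. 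Feeding this lemma into the derivative computation of the previous paragraph shows $\frac{d}{dt}\log f(t) = 0$, hence $f(1) = f(0) = \det(I) = 1$, which is the claim.

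An alternative route, if the homotopy bookkeeping proves awkward, is to factor $\e^C\e^D\e^{-C}\e^{-D} = \big(\e^C\e^D\e^{-C}\e^{-D}\e^{?}\big)\cdots$ and directly invoke a multiplicativity argument: write $\e^C\e^D\e^{-C} = \e^{D'}$ is generally false, but $\e^C(\e^D - I)\e^{-C}$ has the same spectrum as $\e^D - I$, and one can try to show $\det(\e^C\e^D\e^{-C}\e^{-D}) = \det(\e^C\e^D\e^{-C})\det(\e^{-D}) / (\text{regularization})$ does not see the quasinilpotent conjugation. I expect the homotopy argument to be cleaner, so I would pursue that first and keep the vanishing-trace lemma as the technical heart of the proof.
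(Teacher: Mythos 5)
Your overall strategy (deform $C\mapsto tC$ and show that the logarithmic derivative of the determinant is a vanishing trace) could probably be made to work, but as written it has a genuine gap at its foundation: you never upgrade the hypothesis \eqref{Kcond}, which concerns $\e^{C}-I$, to the statement $C\pa{\e^{D}-I},\,\pa{\e^{D}-I}C\in\mathcal S_1$ that your argument actually consumes. Without it, neither $U_t-I\in\mathcal S_1$ for intermediate $t$ (so $f(t)$ need not even be defined), nor $\dot U_t=\e^{tC}\left[C,\e^{D}\right]\e^{-tC}\e^{-D}\in\mathcal S_1$ (so the formula $\tfrac{d}{dt}\log f=\tr\pa{U_t^{-1}\dot U_t}$ has no content), is available. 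Your sketch --- ``expand the exponentials and reduce to polynomial expressions $N^k$ paired against $\e^{D}-I$, where each individual product is trace class'' --- is circular: knowing that $N^k\pa{\e^{D}-I}\in\mathcal S_1$ for $k\ge1$ already presupposes $N\pa{\e^{D}-I}\in\mathcal S_1$. This missing step is precisely the one place where quasinilpotence of $C$ is essential, and it is the heart of the paper's proof: write $\e^{C}-I=GC$ with $G=\sum_{k\ge0}C^k/(k+1)!$, observe that $G-I$ is quasinilpotent because $\norm{(G-I)^n}^{1/n}\le\norm{C^n}^{1/n}\e^{\norm{C}}\to0$, hence $G$ is invertible, and conclude $C\pa{\e^{D}-I}=G^{-1}\pa{\e^{C}-I}\pa{\e^{D}-I}\in\mathcal S_1$ (and similarly on the other side). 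Once this is in hand, the paper finishes not by a homotopy but by a single application of Lemma \ref{lem2}, with the quasinilpotent operator playing the role of $D$ there.

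A secondary problem is what you call the technical heart: the principle that a trace-class commutator of a bounded operator with a quasinilpotent operator has vanishing trace is not established by your ``Lidskii/spectral-radius argument'' --- the eigenvalues of $[P,Q]$ are not governed by the spectrum of $Q$, and, as the paper's shift example shows, nonzero traces of trace-class commutators do occur, so any such principle would require a real proof. Fortunately you do not need it: what your derivative computation actually requires is only that $\tr(NM)=\tr(MN)$ whenever both $NM$ and $MN$ are trace class, which is a standard consequence of Lidskii's theorem (the nonzero eigenvalues of $NM$ and $MN$ coincide with multiplicities) and involves no quasinilpotence; this is exactly the device used in \eqref{eq:inv}. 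With both repairs your homotopy argument would close, but at that point it is considerably longer than the paper's reduction to Lemma \ref{lem2}.
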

We now construct a simple example, that shows that condition \rm{(iii)} is not a necessary, based on this observation.
\begin{ex}
Let $C=D=ML$, where $L$ is the backward shift operator on $\ell^2(\mathbb N)$, and $M$ is a multiplication operator on the same space defined by 
\[Me_n=\begin{cases}\frac1{\sqrt n} e_n& n\in2\mathbb N\\ 0 & n\in2\mathbb N-1\end{cases}.\]
Then \eqref{eq:Det} holds trivially for $A=\e^C$, $B=\e^D$ as $C,D$ commute. We also note that  $C^2=0$ (so $C$ is nilpotent) and  $\e^C-I=C$ (so \eqref{Kcond} holds as well). However, $\pa{\e^C-I}\pa{\e^{C^*}-I}=CC^*=M^2\notin\mathcal S_1$, so (iii) in Theorem \ref{thm:main} is not satisfied.  
\end{ex}

\begin{rem}
We next note that if $A$ (or $B$) is normal, then \rm{(ii)} is equivalent to  \rm{(iii)}, so in this case Conjecture \ref{conj} becomes a theorem, confirming Kitaev's formal observation. In fact, the proofs of Theorem \ref{thm:main} and Proposition \ref{lem3} can be combined to show that Conjecture \ref{conj} is satisfied for the so-called spectral operators, introduced by Dunford, \cite{D}. 
\end{rem}
As we have already mentioned, \eqref{eq:Pinc} immediately implies
\begin{cor}\label{cor}
If $C,D\in\mathcal L(\mathcal H)$ satisfy \eqref{Kcond} and $[C,D]\in\mathcal S^1$, then $\tr[C,D]\in  2\pi i \Z$.
\end{cor}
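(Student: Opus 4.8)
The plan is to obtain the quantization by equating two independent evaluations of a single Fredholm determinant. Put $A=\e^{C}$ and $B=\e^{D}$; these are bounded and invertible, so hypothesis (i) of Conjecture \ref{conj} holds, and the assumption \eqref{Kcond} is precisely hypothesis (ii) for this pair. I would then invoke whichever of the results already established applies to this $(A,B)$ — Theorem \ref{thm:main} when (iii) is also satisfied, Proposition \ref{lem3} when $C$ is quasinilpotent, or the extension of Conjecture \ref{conj} to spectral operators noted above — in order to conclude that \eqref{eq:Det} holds for $A,B$, i.e. $\det\bigl(\e^{C}\e^{D}\e^{-C}\e^{-D}\bigr)=1$.

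Next I would bring in the Helton--Howe--Pincus formula \eqref{eq:Pinc} recalled in the Introduction. Since $[C,D]\in\mathcal S_1$, the operator $\e^{C}\e^{D}\e^{-C}\e^{-D}-I$ lies in $\mathcal S_1$ (this also follows from \eqref{Kcond} on its own), so the Fredholm determinant is well defined and \eqref{eq:Pinc} gives $\det\bigl(\e^{C}\e^{D}\e^{-C}\e^{-D}\bigr)=\e^{\tr[C,D]}$. Comparing with the previous paragraph yields $\e^{\tr[C,D]}=1$, so $\tr[C,D]$ is an integer multiple of $2\pi\iu$, which is the asserted quantization.

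The deduction itself presents essentially no obstacle: all the genuine work has been pushed into the step producing $\det(\e^{C}\e^{D}\e^{-C}\e^{-D})=1$, that is, into Theorem \ref{thm:main} and Proposition \ref{lem3}. What I would stress instead is the scope of the statement: for a pair $(C,D)$ subject only to \eqref{Kcond} and $[C,D]\in\mathcal S_1$ the identity $\det(\e^{C}\e^{D}\e^{-C}\e^{-D})=1$ is equivalent to Conjecture \ref{conj} for $A=\e^{C}$, $B=\e^{D}$ and is not known in general, so the corollary is unconditional exactly on the classes for which \eqref{eq:Det} has been proved — in particular when $C$ or $D$ is self-adjoint, where (ii) $\Leftrightarrow$ (iii) and one recovers Kitaev's original observation.
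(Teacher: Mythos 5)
Your argument coincides with the paper's own (one-line) proof: the corollary is obtained by combining the Helton--Howe--Pincus identity \eqref{eq:Pinc}, applicable since $[C,D]\in\mathcal S_1$, with the determinant formula \eqref{eq:Det} for $A=\e^{C}$, $B=\e^{D}$ supplied by Theorem \ref{thm:main} and Proposition \ref{lem3}, yielding $\e^{\tr[C,D]}=\det\pa{\e^{C}\e^{D}\e^{-C}\e^{-D}}=1$. Your closing caveat is also well taken: under the literal hypotheses \eqref{Kcond} and $[C,D]\in\mathcal S_1$ alone, the input $\det\pa{\e^{C}\e^{D}\e^{-C}\e^{-D}}=1$ is precisely Conjecture \ref{conj} for this pair, so the statement is unconditional exactly on the classes (normal, quasinilpotent, spectral) for which \eqref{eq:Det} has actually been established.
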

One can of course suspect, based on the vanishment of the trace of the commutator in the finite dimensional case, that in fact the only allowed value for 
$\tr[C,D]$ in the statement above is zero. To this end, we  construct 
\begin{ex}\label{ex}
There exist self-adjoint operators $C,D$ satisfying the assumptions of Corollary \ref{cor}  such  that  $\tr[C,D]=-8\pi i $. Specifically, let $C=f(x):= 2\pi i \frac{x}{\langle x\rangle}$ where $\langle x\rangle=(1+x^2)^{1/2}$ and let $D=f(p)$, where $x$ and $p=-i\frac d{dx}$ are the position and momentum operators on $L(\mathbb R)$, see \cite[Section 4]{S} for details (we note that here $f(p)$ is understood as a convolution operator, see \cite[Theorem IX.29]{RS}).  
Then  \eqref{Kcond} is satisfied, $[C,D]\in\mathcal S^1$, 
and $\tr[C,D]=-8\pi i$.
\end{ex}
We will verify the validity of this construction at the end of Section \ref{sec:proofs}. 

 Since $U^k-I=(U-I)\sum_{j=0}^{k-1}U^j$ for any unitary $U$ and any $k\in\Z$, one deduces from this example that there are operators $C,D$ satisfying Corollary \ref{cor} above such that $\tr[C,D]=8k\pi i $ for {\it any} $k\in\Z$.
\section{Proofs}\label{sec:proofs}
\begin{proof}[Proof of Theorem \ref{thm:main}]
\begin{lemma}\label{lem:SVD}
Assume that the assumptions of Theorem \ref{thm:main} hold.  Let $A=U|A|$ be the polar decomposition for $A$. Then $U$ is in fact a unitary operator, $|A|$ is invertible, there are $C,D$ that are normal and bounded such that $|A|=\e^{C}$, $U=\e^{D}$, and we have
\be\label{eq:SVD}
\pa{|A|-I}(B-I),(B-I)\pa{|A|-I},\pa{U-I}(B-I),(B-I)\pa{U-I} \in \mathcal S_1.
\ee
In addition, the formula 
\be\label{eq:prod}
\det\pa{ABA^{-1}B^{-1}}=\det\pa{|A|\,B\,|A|^{-1}B^{-1}}\det\pa{BU^*B^{-1}U}
\ee
holds.
\end{lemma}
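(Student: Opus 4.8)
The plan is to prove Lemma \ref{lem:SVD} in three stages: first establish that the polar decomposition factor $U$ is unitary and $|A|$ is invertible, then functional-calculus the two normal factors into exponentials of normal operators, and finally verify the trace-class conditions \eqref{eq:SVD} and the multiplicativity \eqref{eq:prod}.

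First I would argue that $U$ is unitary and $|A|$ is invertible. Since $A$ is bounded and invertible, $|A|=(A^*A)^{1/2}$ is a bounded, positive, invertible operator (its spectrum is bounded away from $0$), and hence $U=A|A|^{-1}$ is a well-defined bounded operator with bounded inverse $|A|A^{-1}$; being a partial isometry with trivial kernel and cokernel, $U$ is unitary. Then $|A|=e^{C}$ with $C=\log|A|$ self-adjoint and bounded (continuous functional calculus applied to $\log$, which is bounded on $\spec(|A|)\subset[c,C]$ for some $c>0$), and $U=e^{D}$ with $D$ normal and bounded: take $D=i\,\mathrm{arg}(U)$ using a branch of $\log$ that is continuous on $\spec(U)\subset\{|z|=1\}$ — if the spectrum is all of the unit circle one must choose the branch cut to avoid an isolated point of the spectrum, or more robustly invoke that the unitary group of a von Neumann algebra (here the commutant-closed algebra generated by $U$) is connected, so $U=e^{D}$ for some normal $D$; in any case $D$ normal bounded with $e^{D}=U$ exists.

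Next I would derive \eqref{eq:SVD}. The key point is that $C$ and $D$ lie in the norm-closed (indeed von Neumann) algebra generated by $|A|$ respectively $U$, hence can be norm-approximated by polynomials in $|A|,|A|^{-1}$ respectively $U,U^{-1}$. So it suffices to show $(|A|-I)(B-I)$, $(B-I)(|A|-I)$, $(|A|^{-1}-I)(B-I)$, $(U-I)(B-I)$, etc., are all in $\mathcal S_1$, and then pass polynomials through using the ideal property of $\mathcal S_1$ and the identity $p(X)-p(I)=(X-I)\,q(X)$ plus $e^{C}-I=(C)\cdot(\text{entire series})$, more precisely $e^{C}-I = g(C)(C)$ with $g$ entire, and $C=h(|A|)(|A|-I)$ with $h$ obtained from the power series of $\log$ around $1$ — actually cleaner: $C$ itself is a norm-limit of polynomials in $|A|-I$ and $(|A|-I)(\cdots)$, so $(e^C - I)(B-I)$ is a norm-limit of expressions $(|A|-I)(\text{bounded})(B-I)$... here one must be a little careful since we need the factor $(B-I)$ adjacent; the honest route is: $(e^C-I)(B-I)$ — write $e^C - I$ as a norm-convergent series in $C$; each term $C^n(B-I)$ with $n\ge 1$ equals $C^{n-1}\cdot C(B-I)$ and $C(B-I)$ is handled by approximating $C$ by polynomials in $|A|$ acting from the left of $(B-I)$ and using $(|A|^k - I)(B-I) = (|A|-I)(\sum |A|^j)(B-I) \in \mathcal S_1$ from hypothesis (ii)-style inputs; so $C(B-I)\in\mathcal S_1$, and then $C^{n-1}\cdot[C(B-I)]\in\mathcal S_1$ since $\mathcal S_1$ is an ideal; summing gives $(e^C - I)(B-I)\in\mathcal S_1$ provided the series converges in $\mathcal S_1$-norm, which it does by $\|C^{n-1} C(B-I)\|_1 \le \|C\|^{n-1}\|C(B-I)\|_1$. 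The roles of $C^*-I$ via hypothesis (iii) feed in exactly the same way to handle $|A|^{-1}-I$ and the right-multiplication versions, and the $U$-statements follow from $U-I = (A-I)|A|^{-1} - (\,|A|-I\,)|A|^{-1} \cdot$(rearrangement), combining (ii) and the $|A|$-results. This reconciliation — getting trace-class for the exponents from trace-class of the original operators, using both (ii) and (iii) symmetrically — is the technical heart.

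Finally, \eqref{eq:prod}: since $A = U|A| = e^D e^C$, we have $ABA^{-1}B^{-1} = e^D e^C B e^{-C} e^{-D} B^{-1}$. Insert $e^{-D}e^D = I$: write $ABA^{-1}B^{-1} = e^D(e^C B e^{-C} B^{-1})(B e^{-D} B^{-1})$... more carefully $e^D e^C B e^{-C}e^{-D}B^{-1} = e^D\,[e^C B e^{-C} B^{-1}]\, B\, e^{-D}\,B^{-1} = \big(e^D [e^C B e^{-C}B^{-1}] e^{-D}\big)\big(e^D B e^{-D} B^{-1}\big)$. By \eqref{eq:SVD} each bracketed factor is $I$ plus trace-class, so both Fredholm determinants are defined, and $\det$ is multiplicative on $I+\mathcal S_1$; moreover conjugation by the (for this purpose, bounded invertible) operator $e^D$ preserves the determinant, $\det(e^D X e^{-D}) = \det X$ for $X\in I+\mathcal S_1$. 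This yields \eqref{eq:prod}. The main obstacle I anticipate is the second stage — rigorously transferring the trace-class hypotheses from $A-I, A^*-I$ to $e^C - I, e^D - I$ (and their inverses, and both multiplication orders), keeping track of which of (ii) versus (iii) is needed where and ensuring the approximating-polynomial/series manipulations converge in $\mathcal S_1$-norm rather than merely operator norm; the unitarity of $U$ and the multiplicativity of $\det$ are comparatively routine.
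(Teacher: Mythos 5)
Your overall architecture matches the paper's: unitarity of $U$ and invertibility of $|A|$ from invertibility of $A$, exponential representations via functional calculus, the trace-class inclusions \eqref{eq:SVD}, and then the factorization $ABA^{-1}B^{-1}=\big(\e^D[\e^CB\e^{-C}B^{-1}]\e^{-D}\big)\big(\e^DB\e^{-D}B^{-1}\big)$ together with conjugation-invariance and multiplicativity of the determinant (the paper uses the equivalent factorization $U\pa{|A|B|A|^{-1}B^{-1}}\pa{BU^*B^{-1}U}U^*$). The first and third stages are fine.

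The second stage, however, has a genuine gap at exactly the point you flag as "the technical heart." You reduce everything to showing $(|A|-I)(B-I)\in\mathcal S_1$ (and its three companions), and then invoke "hypothesis (ii)-style inputs" to get it. But hypothesis (ii) concerns $A-I$, not $|A|-I$, and there is no ideal-property or power-series manipulation that converts one into the other; this is precisely where (iii) must enter, and your proposal never uses it for this purpose (you only mention (iii) in connection with $|A|^{-1}-I$ and the right-multiplied versions). The missing step is the algebraic identity
\begin{equation*}
(A^*A-I)(B-I)=(A^*+I)(A-I)(B-I)-(A-I)(B-I)+(A^*-I)(B-I)\in\mathcal S_1,
\end{equation*}
which combines (ii) and (iii), followed by
\begin{equation*}
(|A|-I)(B-I)=(|A|+I)^{-1}(A^*A-I)(B-I)\in\mathcal S_1,
\end{equation*}
using $A^*A-I=(|A|+I)(|A|-I)$ and invertibility of $|A|+I$; the statements for $U$ then follow from $U-I=(A-|A|)|A|^{-1}$ and the two-sided $|A|$-results. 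Without this identity the lemma is not proved. A secondary remark: your detour through norm-approximating $C=\log|A|$ by polynomials and resumming $\e^C-I$ in $\mathcal S_1$-norm is unnecessary, since $\e^C=|A|$ and $\e^D=U$ exactly, so \eqref{eq:SVD} \emph{is} the statement about $|A|-I$ and $U-I$; moreover, as you yourself worry, norm-approximation alone cannot establish membership in $\mathcal S_1$, so that route would need the same base-case inclusions anyway.
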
 
\begin{proof}
The fact that $U$ and $|A|$ are invertible (and consequently have exponential representation in terms of normal operators) follows directly from the invertibility of $A$, so we only need to establish \eqref{eq:SVD}--\eqref{eq:prod}. To this end, we note that 
\[(A^*A-I)(B-I)=(A^*+I)(A-I)(B-I)-(A-I)(B-I)+(A^*-I)(B-I) \in \mathcal S_1\]
by {\rm(ii-iii)}. Hence
\[(|A|-I)(B-I)=(|A|+I)^{-1}(A^*A-I)(B-I)\in \mathcal S_1\]
as well. An identical argument yields the inclusion $(B-I)(|A|-I)\in \mathcal S_1$. We also have
\begin{multline*}
(B-I)(U-I)=(B-I)(A-|A|)|A|^{-1}\\ =(B-I)(A-I)|A|^{-1}-(B-I)(|A|-I)|A|^{-1}\in \mathcal S_1.
\end{multline*}
Finally, we have
\begin{multline*}
(U-I)(B-I)=(A-|A|)|A|^{-1}(B-I)=(A-|A|)(B-I)+(A-|A|)(|A|^{-1}-I)(B-I)\\=(A-I)(B-I)-(|A|-I)(B-I)-(A-|A|)|A|^{-1}(|A|-I)(B-I)\in \mathcal S_1,
\end{multline*}
so we established \eqref{eq:SVD}.

The relation \eqref{eq:prod} follows from the fact that $|A|B|A|^{-1}B^{-1}=I+K$, $BU^*B^{-1}U=I+M$ with $K,M\in \mathcal S_1$ by  
\be\label{eq:comform}
ABA^{-1}B^{-1}=I+[A,B]A^{-1}B^{-1}
\ee 
 and \eqref{eq:SVD}, the representation
\[ABA^{-1}B^{-1}=U\pa{|A|B|A|^{-1}B^{-1}}\pa{BU^*B^{-1}U}U^*,\]
as well as the basic properties of the Fredholm determinant.
\end{proof}
Applying Lemma \ref{lem:SVD} twice, we see that the statement of Theorem \ref{thm:main}  follows from
\begin{prop}\label{thm}
Let $A,B$ be  bounded normal operators in $\mathcal H$ that satisfy
 \[\pa{\e^{A}-I}\pa{\e^{B}-I},\pa{\e^{B}-I}\pa{\e^{A}-I}\in\mathcal S_1.\]
  Then  
\[\det\pa{\e^{A}\e^{B}\e^{-A}\e^{-B}}=1.\]
\end{prop}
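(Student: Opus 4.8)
The plan is to combine two structural observations with a spectral approximation of $\e^{A}$. Write $u=\e^{A}$, $v=\e^{B}$; these are bounded, invertible and normal, with $(u-I)(v-I),(v-I)(u-I)\in\mathcal S_1$, and by \eqref{eq:comform} one has $uvu^{-1}v^{-1}=I+[u,v]u^{-1}v^{-1}$ with $[u,v]=(u-I)(v-I)-(v-I)(u-I)\in\mathcal S_1$, so $\det(uvu^{-1}v^{-1})$ is a well-defined Fredholm determinant and the task is to show it equals $1$.

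\emph{Observation 1.} The map $g\mapsto\det(gvg^{-1}v^{-1})$ is a homomorphism from the group $\mathcal G_v:=\{g\in\caL(\caH)\ \text{invertible}:gvg^{-1}v^{-1}-I\in\mathcal S_1\}$ into $\bbC^{*}$; this follows at once from the identity $g_1g_2\,v\,(g_1g_2)^{-1}v^{-1}=\bigl(g_1(g_2vg_2^{-1}v^{-1})g_1^{-1}\bigr)\bigl(g_1vg_1^{-1}v^{-1}\bigr)$ together with the multiplicativity and conjugation–invariance of the Fredholm determinant (and with the fact that $\mathcal G_v$ is genuinely a subgroup). \emph{Observation 2 (the key lemma).} If $E$ is an orthogonal projection, $v$ is invertible, and $E(v-I),(v-I)E\in\mathcal S_1$, then $\det\bigl(w_\mu v w_\mu^{-1}v^{-1}\bigr)=1$ for all $\mu\in\bbC^{*}$, where $w_\mu:=\mu E+(I-E)$. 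To prove this I would conjugate by $v^{-1}$ and expand: with $E'=v^{-1}Ev$ and $s:=E'-E=v^{-1}[E,v]\in\mathcal S_1$ (note $[E,v]=E(v-I)-(v-I)E$), the expansion of $\bigl(I+(\mu-1)E'\bigr)\bigl(I+(\mu^{-1}-1)E\bigr)$ has vanishing coefficient of $E$ since $(\mu-1)+(\mu^{-1}-1)+(\mu-1)(\mu^{-1}-1)=0$, so $w_\mu v w_\mu^{-1}v^{-1}$ has the same determinant as $I+(\mu-1)s\,w_\mu^{-1}$. Hence $g(\mu):=\det(w_\mu v w_\mu^{-1}v^{-1})$ is a holomorphic homomorphism $\bbC^{*}\to\bbC^{*}$, so $g(\mu)=\mu^{n}$ with $n=g'(1)$, and $g'(1)=\tr(s)$ by Jacobi's formula (the argument of $\det$ being $I$ at $\mu=1$). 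Finally $\tr(s)=\tr\bigl(v^{-1}E(v-I)\bigr)-\tr\bigl(v^{-1}(v-I)E\bigr)$, and using $(v-I)v^{-1}=v^{-1}(v-I)=I-v^{-1}$ together with cyclicity of the trace (legitimate because $E(v-I),(v-I)E\in\mathcal S_1$) both terms equal $\tr\bigl((I-v^{-1})E\bigr)$, so $\tr(s)=0$ and $g\equiv1$.

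To pass to the general normal case I would use the spectral theorem for $u$: let $P$ be its spectral measure and, for each $n$, fix a Borel partition $\sigma(u)=\bigsqcup_j\Gamma_j$ into sets of diameter $<1/n$, taking — when $1\in\sigma(u)$ — one cell $\Gamma_{j_0}=\{z:|z-1|<1/(2n)\}$; set $\mu_{j_0}=1$, $\mu_j\in\Gamma_j$ otherwise, and $u_n=\sum_j\mu_j P(\Gamma_j)$. Then $u_n$ is a finite–spectrum normal invertible operator, $u_n\to u$ in norm, and $u_n-I=\phi_n(u)(u-I)=(u-I)\phi_n(u)$ for a scalar function $\phi_n$ that the choice of $\Gamma_{j_0}$ renders uniformly bounded in $n$ with $\phi_n(u)\to I-P(\{1\})$ strongly; since $\mathcal S_1$ is an ideal and $(u-I)P(\{1\})=0$, this yields $(u_n-I)(v-I)\to(u-I)(v-I)$ and $(v-I)(u_n-I)\to(v-I)(u-I)$ in $\mathcal S_1$, hence $u_nvu_n^{-1}v^{-1}-I\to uvu^{-1}v^{-1}-I$ in $\mathcal S_1$ and $\det(u_nvu_n^{-1}v^{-1})\to\det(uvu^{-1}v^{-1})$. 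For fixed $n$, orthogonality of the $P(\Gamma_j)$ together with $(u_n-I)(v-I),(v-I)(u_n-I)\in\mathcal S_1$ forces $P(\Gamma_j)(v-I),(v-I)P(\Gamma_j)\in\mathcal S_1$ whenever $\mu_j\neq1$, and $u_n=\prod_{j:\mu_j\neq1}\bigl(\mu_j P(\Gamma_j)+(I-P(\Gamma_j))\bigr)$ is a commuting product of elements of $\mathcal G_v$; so Observation 1 and the lemma give $\det(u_nvu_n^{-1}v^{-1})=1$, and letting $n\to\infty$ completes the proof.

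I expect the crux to be the identity $\tr(v^{-1}[E,v])=0$: this is exactly the step where one uses that $v-I$ becomes trace class after multiplication by $E$ on \emph{either} side, not merely $[u,v]\in\mathcal S_1$ — for the non-normal unilateral shift the analogous trace is nonzero, which is precisely what underlies the counterexample of \S1. The secondary technical point requiring care is the behaviour of the approximation at the fixed point $1\in\sigma(u)$, handled above by the dedicated shrinking cell, which is what keeps $\phi_n$ uniformly bounded and hence preserves the $\mathcal S_1$–convergence that makes the determinants converge.
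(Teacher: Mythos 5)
Your proof is correct, and it takes a genuinely different route from the paper's. The paper first establishes Lemma \ref{lem2} --- $\det\pa{\e^{D}\e^{B}\e^{-D}\e^{-B}}=1$ whenever $D\pa{\e^{B}-I},\pa{\e^{B}-I}D\in\mathcal S_1$ --- via the Helton--Howe--Pincus identities \eqref{eq:PIN}--\eqref{eq:Pinc}, and then splits the \emph{logarithm} $A$ by its spectral projections $Q_\Delta$, $P_\Delta$ onto the regions at distance $\ge\Delta$ (resp.\ $<\Delta$) from $2\pi i\Z$: the far part falls under Lemma \ref{lem2}, and the near part converges as $\Delta\to0$ to $AP$ with $\e^{AP}=I$. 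You instead discretize $u=\e^{A}$ itself into finite-spectrum normal approximants, reduce via the homomorphism property to the elementary blocks $w_\mu=\mu E+(I-E)$, and dispose of those by holomorphy in $\mu$ together with $\tr\pa{v^{-1}[E,v]}=0$; no appeal to Helton--Howe--Pincus is needed. The decisive cancellation is the same in both arguments --- your $\tr\pa{v^{-1}[E,v]}=0$ is the paper's display \eqref{eq:inv} specialized to $D=E$, and both rest on the Lidskii-theorem fact that $\tr(XY)=\tr(YX)$ when $XY$ and $YX$ are separately trace class --- and both need a limiting device for the spectral subspace on which $\e^{A}=I$ (your shrinking cell at $1$, the paper's $P_\Delta\to P$). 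The paper's route buys the standalone Lemma \ref{lem2}, which is reused for Proposition \ref{lem3} on quasinilpotent operators; yours is more self-contained and isolates the mechanism at the level of single spectral projections. Two small remarks: you do not actually need the classification of holomorphic characters of $\bbC^{*}$, since $g'(1)=\tr(s)=0$ combined with the differentiated functional equation $\mu g'(\mu)=g'(1)g(\mu)$ already forces $g\equiv1$; and your Observation 2 is in fact the special case $D=(\log\mu)E$ of the paper's Lemma \ref{lem2}, so the two arguments are closer than they first appear.
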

\end{proof}
\begin{proof}[Proof of Proposition \ref{thm}]
We will use the following:
\begin{lemma}\label{lem2}
Let $B,D$ be a pair of bounded operators on $\mathcal H$ that satisfy
\[D\pa{\e^{B}-I},\pa{\e^{B}-I}D\in\mathcal S_1.\]
Then 
\[\det\pa{\e^{D}\e^{B}\e^{-D}\e^{-B}}=1.\]
\end{lemma}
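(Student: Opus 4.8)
The plan is to interpolate between $\e^{D}\e^{B}\e^{-D}\e^{-B}$ and the identity. Put $X:=\e^{B}-I$ and, for $t\in[0,1]$, $G(t):=\e^{tD}\e^{B}\e^{-tD}\e^{-B}$, so that $G(0)=I$, $G(1)=\e^{D}\e^{B}\e^{-D}\e^{-B}$, and each $G(t)$ is invertible (a product of four exponentials). Writing $\e^{tD}-I=tD\,\varphi(tD)$ with $\varphi(z)=(\e^{z}-1)/z$ entire, one has
\[
[\e^{tD},\e^{B}]=[\e^{tD},X]=(\e^{tD}-I)X-X(\e^{tD}-I)=t\,\varphi(tD)(DX)-t\,(XD)\,\varphi(tD)\in\mathcal S_1
\]
by the hypothesis (functions of $D$ commute with $D$), and $G(t)-I=[\e^{tD},\e^{B}]\,\e^{-tD}\e^{-B}$. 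Hence $t\mapsto G(t)-I$ is an analytic curve in $\mathcal S_1$, $\det G(t)$ is a well-defined Fredholm determinant, and $\frac{d}{dt}\log\det G(t)=\tr\!\big(G(t)^{-1}\dot G(t)\big)$.

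Next I would compute this logarithmic derivative. With $H(t):=\e^{tD}\e^{B}\e^{-tD}$ we have $G(t)=H(t)\e^{-B}$ and $\dot H(t)=[D,H(t)]$, so $G(t)^{-1}\dot G(t)=\e^{B}\big(H(t)^{-1}DH(t)-D\big)\e^{-B}$; since $\e^{\pm tD}D\e^{\mp tD}=D$, this reduces to $\e^{B}\e^{tD}K\e^{-tD}\e^{-B}$ with $K:=\e^{-B}D\e^{B}-D=\e^{-B}[D,X]=\e^{-B}(DX-XD)\in\mathcal S_1$. A trace-class factor being present, ordinary cyclicity of the trace gives $\tr\!\big(G(t)^{-1}\dot G(t)\big)=\tr(K)$, which does not depend on $t$; integrating over $[0,1]$ yields
\[
\det\!\big(\e^{D}\e^{B}\e^{-D}\e^{-B}\big)=\e^{\tr K}.
\]
Thus the statement is equivalent to $\tr K=0$.

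This last identity is the heart of the matter. Since functions of $B$ commute and $\e^{-B}X=I-\e^{-B}=:Z$, moving the bounded factor $\e^{-B}$ around the trace-class operators $DX$ and $XD$ (ordinary cyclicity) gives $\tr K=\tr(\e^{-B}DX)-\tr(\e^{-B}XD)=\tr(DZ)-\tr(ZD)$, where now $DZ=(DX)\e^{-B}\in\mathcal S_1$ and $ZD=\e^{-B}(XD)\in\mathcal S_1$. It must be stressed that $\tr(DZ)=\tr(ZD)$ here is \emph{not} the naive cyclicity of the trace — that fails for general bounded operators, which is precisely the phenomenon behind the counterexamples discussed earlier — but it does hold because \emph{both} $DZ$ and $ZD$ are trace class. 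Indeed, for any bounded $P,Q$ the operators $PQ$ and $QP$ have the same nonzero eigenvalues with the same algebraic multiplicities: for $\lambda\ne 0$ the map $Q$ sends $\ker(PQ-\lambda)^{k}$ injectively into $\ker(QP-\lambda)^{k}$. Hence, if $PQ,QP\in\mathcal S_1$, Lidskii's theorem gives $\tr(PQ)=\sum_{j}\lambda_{j}(PQ)=\sum_{j}\lambda_{j}(QP)=\tr(QP)$ (the eigenvalue series being absolutely convergent). Applying this with $P=D$ and $Q=Z$ gives $\tr K=0$, and therefore $\det\!\big(\e^{D}\e^{B}\e^{-D}\e^{-B}\big)=1$.

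The two points that require care are the justification of the logarithmic-derivative formula along $G(t)$, which rests on the analyticity of $t\mapsto G(t)-I$ in $\mathcal S_1$-norm noted above, and the use of Lidskii's theorem together with the coincidence of the nonzero spectra of $DZ$ and $ZD$; this second ingredient is the real obstacle, since it is exactly what one must substitute for the invalid cyclic identity $\tr(DZ)=\tr(ZD)$ that one is tempted to write down. Everything else amounts to bookkeeping with the hypothesis $D(\e^{B}-I),\,(\e^{B}-I)D\in\mathcal S_1$ and with the fact that operators obtained from $B$ by functional calculus commute.
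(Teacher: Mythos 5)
Your proof is correct, but it reaches the conclusion by a different route than the paper. The paper applies the Helton--Howe--Pincus identities \eqref{eq:PIN}--\eqref{eq:Pinc} to factor
$\det\pa{\e^{D}\e^{B}\e^{-D}\e^{-B}}$ into
$\exp\pa{-\tr\left[\e^{B}D\e^{-B},D\right]}\cdot\exp\pa{-\tr\pa{\e^{B}D\e^{-B}-D}}$; the first trace vanishes because it is the trace of a commutator of a trace-class operator with a bounded one, and the second vanishes by exactly the Lidskii argument you use. You instead differentiate $\det G(t)$ along the path $G(t)=\e^{tD}\e^{B}\e^{-tD}\e^{-B}$, which requires checking $\mathcal S_1$-analyticity of $t\mapsto G(t)-I$ (your factorization $\e^{tD}-I=t\,\varphi(tD)D$ handles this cleanly) and yields $\det G(1)=\e^{\tr K}$ with $K=\e^{-B}D\e^{B}-D$; note that $\tr K=0$ is the same identity as the paper's \eqref{eq:inv} up to replacing $B$ by $-B$. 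The two arguments therefore converge on the identical crux: $\tr(PQ)=\tr(QP)$ when \emph{both} $PQ$ and $QP$ are trace class, via coincidence of nonzero spectra plus Lidskii's theorem — and you are right to flag that this, rather than naive cyclicity, is the real content. What your route buys is independence from the Helton--Howe--Pincus formula (you never need $[\,\cdot\,,\cdot\,]\in\mathcal S_1$ hypotheses for exponents, only the differentiability of the determinant along an $I+\mathcal S_1$-valued analytic path); what it costs is the extra bookkeeping for that differentiability, which the paper avoids since \eqref{eq:PIN}--\eqref{eq:Pinc} are already set up in its introduction.
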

\begin{proof}
Using a basic property of the Fredholm determinant, we have 
\be\nonumber
\det\pa{\e^{D}\e^{B}\e^{-D}\e^{-B}}&=\det\pa{\e^{B}\e^{-D}\e^{-B}\e^{D}\e^{ \pa{\e^{B}D\e^{-B}-D}}\e^{- \pa{\e^{B}D\e^{-B}-D}}}\\ &=\det\pa{\e^{- \e^{B}D\e^{-B}}\,\e^{D}\e^{ \pa{\e^{B}D\e^{-B}-D}}}\det\pa{\e^{- \pa{\e^{B}D\e^{-B}-D}}},
\ee
where both determinants on the right hand side are well-defined. 
We now use \eqref{eq:PIN} to evaluate the first determinator on the right hand side:
\[\det\pa{\e^{- \e^{B}D\e^{-B}}\,\e^{D}\e^{\pa{\e^{B}D\e^{-B}-D}}}=\exp\pa{-\tfrac12\tr\left[\e^{B}D\e^{-B},D\right]}=1,\]
since
\[\tr\left[\e^{B}D\e^{-B},D\right]=\tr\left[\e^{B}D\e^{-B}-D,D\right]=0,\]
where in the last step we used $\e^{B}D\e^{-B}-D=[\e^{B}-I,D]\e^{-B}\in \mathcal S_1$. 
We recall a consequence of Lidskii's theorem: If $X,Y\in\caL(\caH)$ are such that $XY,YX\in\caS_1$, then $\tr\pa{XY}=\tr\pa{YX}$.  Thus
\be\label{eq:inv}
\tr\pa{\e^{B}D\e^{-B}-D}=\tr\pa{\pa{\e^{B}-I}\pa{D\e^{-B}}}-\tr\pa{ \pa{D\e^{-B}}\pa{\e^{B}-I}} =0,
\ee
Using $\det\pa{\e^E}=\e^{\tr E}$ for $E\in\mathcal S_1$ and  \eqref{eq:inv}, we  get 
\[\det\pa{\e^{- \pa{\e^{B}D\e^{-B}-D}}}=1.\]

\end{proof}
Let  $P$ be the  spectral projection  $\chi_{2\pi i \Z}(A)$, where $\chi_W$ stands for the indicator of a set $W$.  
  Then, $\e^{AP}=I$  for a normal operator $A$ and $\det\pa{\e^{AP}\e^{B}\e^{-AP}\e^{-B}}=1$. 
%
%
%
Let $\Delta\in(0,1/2]$, let $W=\set{x\in\mathbb C:\ \dist\pa{x,2\pi i \Z}\ge\Delta}$, and let $Q_\Delta=\chi_W(A)$, and let $P_\Delta=I-Q_\Delta$. 

We first observe that since $\e^{A}-I$ is invertible on $Range\pa{Q_\Delta}$, we have
\[Q_\Delta\pa{\e^{B}-I}=\pa{\pa{\e^{A}-I}^{-1}Q_\Delta}\pa{\e^{A}-I}\pa{\e^{B}-I}\in\mathcal S_1,\]
and similarly
\[\pa{\e^{B}-I}Q_\Delta\in\mathcal S_1.\]
Thus, by Lemma \ref{lem2} we deduce 
\be\label{eq:Pinc1}\det\pa{\e^{AQ_\Delta}\e^{B}\e^{-AQ_\Delta}\e^{-B}}=1.
\ee
Next, we note that 
\[\left[\e^{AP_\Delta},\e^{B}\right]=\left[P_\Delta\pa{\e^{A}-I},\pa{\e^{B}-I}\right]\underset{\Delta\to0}{\to} \left[P\pa{\e^{A}-I},\pa{\e^{B}-I}\right]=\left[\e^{AP},\e^{B}\right],\]
where the convergence is in the trace norm sense (this follows from $SOT-\lim P_\Delta=P$ and the assumptions of Proposition \ref{thm}). This implies 
\be\label{eq:Pinc2}
\det\pa{\e^{AP_\Delta}\e^{B}\e^{-AP_\Delta}\e^{-B}}=\det\pa{I+\left[\e^{AP_\Delta},\e^{B}\right]\e^{-AP_\Delta}\e^{-B}} \to\det\pa{\e^{AP}\e^{B}\e^{-AP}\e^{-B}}=1
\ee
as $\Delta\to0$. 
We now can combine \eqref{eq:Pinc1} and \eqref{eq:Pinc2} to get
\be\label{eq:Pinc3}
\det\pa{\e^A\e^{B}\e^{-A}\e^{-B}} =\det\pa{\e^{AP_\Delta}\e^{B}\e^{-AP_\Delta}\e^{-B}}\det\pa{\e^{B}\e^{-AQ_\Delta}\e^{-B}\e^{AQ_\Delta}} \underset{\Delta\to0}{\to} 1.
\ee
\end{proof}

\begin{proof}[Proof of Proposition \ref{lem3}]
The statement  follows from
\be\label{eq:trcl}
C\pa{\e^{B}-I},\pa{\e^{B}-I}C\in\mathcal S_1
\ee 
and   Lemma \ref{lem2}.

To show \eqref{eq:trcl}, we observe that, denoting
\[D:=\sum_{k=0}^\infty\frac{C^k}{(k+1)!},\]
 we have $\e^{C}-I=DC$. Hence, \eqref{eq:trcl} will follow provided that $D$ is invertible, as
\[C\pa{\e^{B}-I}=D^{-1}\pa{\e^{C}-I}\pa{\e^{B}-I}. \]
To prove that $D$ is invertible, it suffices to show that $D-I$ is a (quasi)nilpotent operator. To this end, we can bound
\[\norm{\pa{I-D}^n}=\norm{(CE)^n}\le\norm{C^n}\norm{E}^n,\quad E=\sum_{k=0}^\infty\frac{C^k}{(k+2)!}.\]
We have
\[\norm{E}\le \sum_{k=0}^\infty\frac{\norm{C}^k}{(k+2)!}\le \e^{\norm{C}},\]
so
\[\norm{\pa{I-D}^n}^{1/n}\le\norm{C^n}^{1/n}\e^{\norm{C}}\to0\mbox{ as } n\to\infty,\]
so $D-I$ is indeed (quasi)nilpotent, and we are done. 
\end{proof}

\begin{proof}[Verification of Example \ref{ex}]
We first note that the conditions \eqref{Kcond} are satisfied by \cite[Theorem XI.21]{RS1} since there exists a $C>0$ such that 
\[\abs{\e^{{f(x)}}-1}\langle x\rangle^2\le C.\] 

We will use the integral representation
\[\frac\pi2\frac1{\langle p\rangle}=\int_0^\infty \frac{dt}{p^2+1+t^2},\]
which implies
\be\nonumber
[C,D]&=[C,p]\frac{2\pi i}{\langle p\rangle}+p\left[C,\frac{2\pi i}{\langle p\rangle}\right]\\ &=-f'(x)\frac{2\pi}{\langle p\rangle}+4\int_0^\infty \frac{p}{p^2+1+t^2}\pa{f'(x)p+pf'(x)}\frac{1}{p^2+1+t^2}dt.
\ee
The integral  can be written as 
\be\nonumber 
\int_0^\infty f'(x)\frac{2p^2}{\pa{p^2+1+t^2}^2}dt&-\int_0^\infty \left[f'(x),\frac{p}{p^2+1+t^2}\right]\frac{p}{p^2+1+t^2}dt\\&-\int_0^\infty \left[f'(x),\frac{p^2}{p^2+1+t^2}\right]\frac{1}{p^2+1+t^2}dt.
\ee
We note that the integrands in the second and third terms have trace norms decaying faster than $\frac{1}{t^2+1}$ in $t$, in particular these terms are trace class, see, e.g., \cite[Section 4]{S} for the trace class properties of the products of functions $F(x)G(p)$. Hence, we get
\[[C,D]=-4f'(x)\int_0^\infty \pa{\frac{1}{p^2+1+t^2}-\frac{2p^2}{\pa{p^2+1+t^2}^2}}dt + \mathcal T=-if'(x)f'(p)+ \mathcal T,\]
where $ \mathcal T$ is a trace class operator. Since $f'(x)=\frac{2\pi i}{\langle x\rangle^3}$, we see that $f'(x)f'(p)\in\mathcal L^1$, so $[C,D]\in\mathcal S_1$ as well. In fact, $\tr\,{\mathcal T}=0$ (this term originates from the commutator of $f'(x)$ with functions of $p$ that decay in $p$ sufficiently fast), so 
\[\tr\,{[C,D]}=i\tr\,{ f'(x)f'(p)}=\frac i{2\pi} \pa{\int_{\mathbb R} f'}^2=-8\pi i,\]
where in the second step we have used the fact that $f'(p)$ is a convolution operator,
\[(f'(p)\phi)(x)=(2\pi)^{-1/2}\int \check{(f')}(x-y)\phi(y)dy,\quad  \check{(f')}(x):=(2\pi)^{-1/2}\int \e^{ixp}{(f')}(p)dp,\]
see  \cite[Theorem IX.29]{RS}. 
\end{proof}
\section*{Acknowledgment}
We thank the reviewers for many suggestions used to improve the manuscript.

\end{document}